\newtheorem{thm}{Theorem}
\newtheorem{lma}{Lemma}
\newtheorem{Def}{Definition}
\DeclareMathOperator{\E}{\mathbb{E}}
\newcommand{\lb}{\left (}
\newcommand{\rb}{\right )}
\newcommand{\script}[1]{{\mathcal {#1}}}
\newcommand{\Pmax}{P_{\rm max}}
\newcommand{\Iinst}{I_{\rm inst}}
\newcommand{\fgammai}{f_{\gamma_i}}
\newcommand{\fgi}{f_{g_i}}
\newcommand{\EE}[1]{\E \left[ #1 \right]}
\newcommand{\EEC}[1]{\E_{\bfY(k)} \left[ #1 \right]}
\newcommand{\bgi}{\bar{g}_i}
\newcommand{\bW}{\bar{W}}
\newcommand{\bfP}{{\bf P}}
\newcommand{\bfpi}{{\bm{\pi}}}
\newcommand{\bgamma}{\bar{\gamma}}
\newcommand{\Yivq}{\{Y_i(k)\}_{k=0}^\infty}
\newcommand{\seq}[1]{\{#1\}_{k=1}^\infty}
\newcommand{\bfY}{{\bf Y}}
\newcommand{\Bw}{B_{\rm w}}
\newcommand{\pardef}[1]{\triangleq [#1_1^{(t)},\cdots,#1_N^{(t)}]^T}
\newcommand{\parFdef}[1]{\triangleq [#1_1(k),\cdots,#1_N(k)]^T}
\newcommand{\DOIC}{\emph{DOIC}}
\newcommand{\Ri}{R_i}
\newcommand{\FDurK}{\vert \script{F}_k\vert}
\newcommand{\Ts}{T_{\rm s}}
\begin{document}
\title{Scheduling in Instantaneous-Interference-Limited CR Networks with Delay Guarantees}
\author{Ahmed Ewaisha, Cihan Tepedelenlio\u{g}lu\\
\small{School of Electrical, Computer, and Energy Engineering, Arizona State University, USA.}\\
\small{Email:\{ewaisha, cihan\}@asu.edu}\\
\small{August 2015}
}
\maketitle
\begin{abstract}
We study an uplink multi secondary user (SU) cognitive radio system having average delay constraints as well as an instantaneous interference constraint to the primary user (PU). If the interference channels from the SUs to the PU have independent but not identically distributed fading coefficients, then the SUs will experience heterogeneous delay performances. This is because SUs causing low interference to the PU will be scheduled more frequently, and/or allocated more transmission power than those causing high interference. We propose a dynamic scheduling-and-power-control algorithm that can provide the required average delay guarantees to all SUs as well as protecting the PU from interference. Using the Lyapunov technique, we show that our algorithm is asymptotically delay optimal while satisfying the delay and interference constraints. We support our findings by extensive system simulations and show the robustness of the proposed algorithm against channel estimation errors.

\end{abstract}

\section{Introduction}

The problem of scarcity in the spectrum band has led to a wide interest in cognitive radio (CR) networks. CRs refer to devices that coexist with the licensed spectrum owners called the primary users (PUs). CRs are capable of dynamically adjusting their transmission parameters according to the environment to avoid harmful interference to the PUs. Hence, CR users are required to adjust their transmission power levels, and -thus- their rates, according to the interference level the PUs can tolerate. This adjustment could lead to severe degradation in the quality of service (QoS) provided for the CR users, if not designed carefully.

Moreover, CR users, also referred to as the secondary users (SUs), located physically closer to the PUs might suffer larger degradation in their QoS compared to those that are far because closer SUs transmit with smaller amounts of power. This problem does not appear in conventional non CR cellular systems since frequency channels tend to be orthogonal in non CR systems. In other words, in non CR systems, all users are allocated the channels via some scheduler that guarantees those users do not interfere with each other. While in CR systems, the SUs should be scheduled and have their power controlled in such a way that prevents the harmful interference to the PUs since they share the spectrum.

The problem of scheduling and/or power control for CR systems has been widely studied in the literature (please see \cite{Letaief_PU_Known_Location,NEP_Distributed,Ewaisha_TVT2015,Iter_Bit_Allocation_OFDM,6464638}, and references therein). 
The algorithms proposed in these works aim at optimizing the throughput for the SUs and, at the same time, protecting the PUs from interference. However, providing guarantees on the queuing delay in CR systems was not the goal of these works. In real-time applications, such as audio/video conference calls, packets are expected to arrive at the destination before a prespecified deadline. Thus, the average packet delay needs to be as small as possible to prevent jitter and to guarantee acceptable QoS for these applications \cite{shakkottai2002scheduling,kang2013performance}.

Queuing delay has gained strong attention recently in the literature and scheduling algorithms have been proposed to guarantee small delay \cite{li2011delay,Two_Q_Light_Hvy,neely2003power}. In \cite{li2011delay}, the authors study the joint scheduling-and-power-allocation problem in the presence of an average power constraint. Although in \cite{li2011delay} the proposed algorithm offers an acceptable delay performance, all users are assumed to transmit with the same power. A power allocation and routing algorithm is proposed in \cite{neely2003power} to maximize the capacity region under an instantaneous power constraint. While the authors show an upper bound on the average delay, this delay performance is not guaranteed to be optimal.

Although queuing theory, originally developed to model packets at a server, can be applied to wireless channels, the challenges are different. One of the main challenges is the fading nature of the wireless channel that changes from a slot to another. Fading requires adapting the user's power and/or rate according to the channel's fading coefficient. The idea of power and/or rate adaptation based on the channel condition does not have an analogy in server problems and, thus, is absent in the aforementioned references. Instead, existing works treat wireless channels as on-off fading channels and do not consider multiple fading levels. Among the relevant references that consider a more general fading channel model are \cite{neely2003power}, which was discussed above, \cite{Fading_No_Scheduling,E_Hossain_CR_Delay_Analysis} where the optimization over the scheduling algorithm was out of the scope of their work, and \cite{Min_Pow_4_Delay_NonCR} that neglects the interference constraint since it considers a non CR system.

In contrast with \cite{Letaief_PU_Known_Location,NEP_Distributed,Ewaisha_TVT2015,Iter_Bit_Allocation_OFDM,6464638} that do not optimize the queuing delay, the problem of minimizing the sum of SUs' average delays is considered in this paper. The proposed algorithm guarantees a bound on the instantaneous interference to the PUs, a guarantee that is absent in \cite{li2011delay,neely2003power}. Based on Lyapunov optimization techniques \cite{li2011delay}, an algorithm that dynamically schedules the SUs as well as optimally controlling their transmission power is presented. The contributions in this paper are: i) Proposing a joint power-control and scheduling algorithm that is optimal with respect to the average delay of the SUs in an interference-limited system; ii) Showing that the proposed algorithm can provide differentiated service to the different SUs based on their heterogeneous QoS requirements. Moreover, the complexity of the algorithm is shown to be polynomial in time since it is equivalent to that of sorting a vector of $N$ numbers, where $N$ is the number of SUs in the system.



The rest of the paper is organized as follows. The network model and the underlying assumptions are presented in Section \ref{Model}. In Section \ref{Prob_Statement} we formulate the problem mathematically. The proposed algorithm, its optimality and complexity are presented in Section \ref{Proposed_Algorithm},
followed by the extensive simulation results in Section \ref{Results}. Finally the paper is concluded in Section \ref{Conclusion}.

\section{Network Model}
\label{Model}
We assume a CR system consisting of a single secondary base station (BS) serving $N$ secondary users (SUs) indexed by the set $\script{N}\triangleq \{1,...N\}$. We are considering the uplink phase where each SU has its own queue buffer for packets that need to be sent to the BS. The SUs are assumed to share a single frequency channel with a single PU that has licensed access to this channel. The CR system is assumed to operate in an underlay fashion where SUs are allowed to transmit as long as the power received by the PU from their transmission does not exceed a prespecified threshold $\Iinst$ at any given slot. Moreover, in order for the BS to be able to decode the received signal, no more than one SU at a time slot is to be assigned the channel for transmission.

\subsection{Channel and Interference Model}
We assume a time slotted structure where each slot is of duration $\Ts$ equal to the coherence time of the channel. The channel between SU $i$ and the BS is assumed to be block fading with instantaneous power gain $\gamma_i^{(t)}$, at time slot $t$, following the probability density function (PDF) $\fgammai(\gamma)$ with mean $\bgamma_i$. The channel gain is assumed to be independent across SUs but not necessary identically distributed
. 
SUs can use a rate adaptation scheme selected based on the channel gain $\gamma_i^{(t)}$. The transmission rate of SU $i$ at time slot $t$ is\footnote{All logarithms in this paper have base $e$, i.e. are natural logarithms.}
\begin{equation}
\Ri\lb P_i^{(t)}\rb=\Bw \log \lb 1+P_i^{(t)}\gamma_i^{(t)} \rb \hspace{0.1in} \rm{packets/sec},
\label{Tx_Rate}
\end{equation}
where $\Bw$ is the bandwidth of the channel, while $P_i^{(t)}$ is the power by which SU $i$ transmits its packet at slot $t$. For a fixed power allocation $P_i^{(t)}=P_i$, we define the service rate of SU $i$ as $\mu_i(P_i)$ where $1/\mu_i(P_i)\triangleq \EE{1/\Ri \lb P_i\rb}$ is the average time required to serve one packet for SU $i$ transmitting with power $P_i$. We assume that $\gamma_i^{(t)}$ has a distribution that results in finite values for the first four moments of the service time $1/\Ri \lb P_i\rb$. That is, $\EE{\lb 1/\Ri \lb P_i\rb\rb^n}<\infty$, $n=1,2,3,4$, $\forall i \in \script{N}$.

We assume that the single PU is in the vicinity of the CR system and is transmitting at all times. This PU suffers interference from the SUs through the channel between each SU and this PU. The interference channel between SU $i$ and the PU, at slot $t$, has a power gain $g_i^{(t)}$ following a PDF $\fgi(g_i)$ with mean $\bgi$. The power gains are assumed to be independent among SUs but not identically distributed. We assume that SU $i$ knows the value of $\gamma_i^{(t)}$ as well as $g_i^{(t)}$, at the beginning of slot $t$ through some channel estimation phase \cite{haykin2005cognitive}. The channel estimation phase to acquire $g_i^{(t)}$ is done by overhearing the pilots transmitted by the primary transmitter to its intended receiver. The channel estimation phase is out of the scope of this work, however the effect of channel estimation errors will be discussed in Section \ref{Results}.

\subsection{Queuing Model}
We assume that each SU $i$ has an M/G/1 queuing system. The number of packets arriving per unit of time to the SU's buffer follows a Poisson process with a fixed parameter $\lambda_i$ packets per second. Each packet has a fixed length of $L$ bits that is constant for all users. However, each packet takes a random amount of time to be transmitted to the BS that depends on the rate of transmission $\Ri( P_i^{(t)} )$. Thus, the service time will be assumed to follow a general distribution throughout the paper that depends on the distribution of $\gamma_i^{(t)}$.

We assume that the packets arriving to the SUs are buffered in infinite-sized buffers and are served according to the first-come-first-serve discipline. Thus when SU $i$ is scheduled for transmission at slot $t$, it transmits the first $M_i^{(t)}$ packets of its queue, where
\begin{align}
\nonumber M_i^{(t)}&=\min \lb \left\lfloor \Ts \Ri\lb P_i^{(t)} \rb \right\rfloor,Q_i^{(t)} \rb \hspace{0.1in} \rm{packets}\\
&\approx \min\lb \Ts \Ri\lb P_i^{(t)} \rb,Q_i^{(t)}\rb,
\label{Num_Pkts}
\end{align}
where $Q_i^{(t)}$ is the number of packets buffered at SU $i$ at the beginning of slot $t$ and is given by
\begin{equation}
Q_i^{(t+1)}\triangleq \lb Q_i^{(t)} + \vert \script{A}_i^{(t)}\vert - M_i^{(t)} \rb^+,
\label{Queue}
\end{equation}
where $\script{A}_i^{(t)}$ is the set of all packets arrived to SU $i$ at slot $t$, while $\vert \script{A}_i^{(t)}\vert$ is the number of elements in this set.\footnote{Our model is also valid if the arrivals follow a discrete time process where $\script{A}_i^{(t)}$ will, then, represent the packets arrived at the beginning of slot $t$.} The approximation in \eqref{Num_Pkts} is valid when the parameter $\Bw \Ts\gg 1$. This is because $\lfloor x \rfloor \approx x$ when $x$ is a large number.

We define the delay $W_i^{(j)}$ of a packet $j$ as the total amount of time, in time slots, packet $j$ spent in SU $i$'s buffer from the instant it joined the queue until the slot it is transmitted in\footnote{We do not include the transmission slot when calculating $W_i^{(j)}$, but we include the residual time which is non-zero since packets are allowed to arrive in the middle of a time slot.}. The time-average delay experienced by SU $i$'s packets is given by \cite{li2011delay}
\begin{equation}
\bW_i \triangleq \lim_{T \rightarrow \infty} \frac{\EE{\sum_{t=1}^T{\sum_{j\in\script{A}_i^{(t)}}{W_i^{(j)}}}}}{\EE{\sum_{t=1}^T{\vert \script{A}_i^{(t)}\vert}}}
\label{Delay}
\end{equation}
which is the expected total amount of time spent by all packets arriving in a time interval, of an large duration, normalized by the expected number of packets arrived in this interval.

\subsection{Frame-Based Policy}
\label{Frame_Based_Policy}
Without losing optimality (Lemma 1 of \cite{li2011delay}), one can use a ``frame-based'' random priority list as the scheduling policy. This policy schedules the SUs based on a strict non-preemptive priority list and updates this list, at the beginning of each frame, randomly following some optimum, genie-aided, distribution defined over the $N!$ distinct priority lists. Hence, we divide time into frames where frame $k$ consists of $\vert \script{F}(k)\vert$ consecutive time-slots, where $\script{F}(k)$ is the set containing the indices of the time slots belonging to frame $k$ (see Fig. \ref{Frame_Structure}). 
Since frames do not overlap, if $t\in\script{F}(k_1)$ then $t \notin \script{F}(k_2)$ as long as $k_1\neq k_2$. One of our goals in this paper is to choose, at the beginning of each frame $k$, the ``best'' priority list so that the system has the same long-term delay performance as this, genie-aided, random priority list. We note that the average delay $\bW_i$ in \eqref{Delay} depends on the chosen priority lists of all frames $k=1,\cdots \infty$. This dependency is implicitly mentioned in the expected value operator $\EE{\cdot}$. We now define how a frame begins and ends. Each frame consists of exactly one \emph{idle period} followed by exactly one \emph{busy period}, both are defined next.

\begin{figure}%
\includegraphics[width=1\columnwidth]{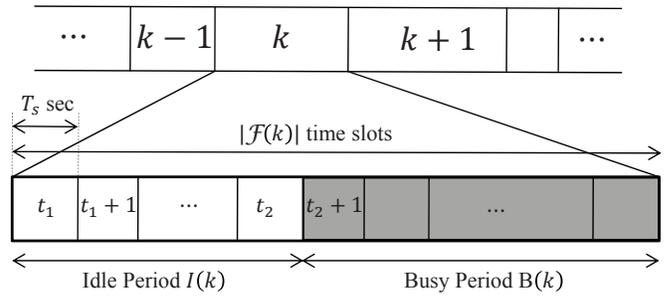}%
\caption{Time is divided into frames. Frame $k$ has $\FDurK$ slots each is of duration $\Ts$ seconds. Different frames can have different number of time slots.}%
\label{Frame_Structure}%
\end{figure}

\begin{Def}
\label{Idle_Def}
Idle period is the period formed by the consecutive time slots where all SUs have empty buffers. An idle period starts with the time slot $t_1$ following the transmission of the last packet in the system, and ends with a time slot $t_2$ when one of the SUs' buffer receives one or more new packets to be transmitted (see Fig. \ref{Frame_Structure}). In other words, $t_1$ satisfies $\sum_{i\in\script{N}}Q_i^{(t_1)}=0$ and $\sum_{i\in\script{N}}Q_i^{(t_1-1)}\neq 0$, while $t_2$ satisfies $\sum_{i\in\script{N}}Q_i^{(t_2)}=0$ and $\sum_{i\in\script{N}}Q_i^{(t_2+1)}\neq 0$.
\end{Def}

\begin{Def}
\label{Busy_Def}
Busy period is the period between two consecutive idle periods.
\end{Def}

The duration of the idle period $I(k)$ and busy period $B(k)$ of frame $k$ are random variables, thus the cardinality $\vert \script{F}(k) \vert=I(k)+B(k)$ is random as well. 

\subsection{Transmission Process}
At the beginning of frame $k$, the BS finds the priority list $\bfpi(k) \parFdef{\pi}$ where $\pi_j(k)$ is the index of the SU who will be assigned priority $j$ during frame $k$. And at the beginning of each time slot $t\in \script{F}(k)$, the BS schedules the user with the highest priority in the list $\bfpi(k)$, among all SUs having non-empty buffers. Then, it broadcasts its index, say index $i^*$, on a common control channel as well as broadcasting the power $P_{i^*}^{(t)}$ by which this SU will be transmitting during slot $t$. SU ${i^*}$, in turn, begins transmission of the first $M_{i^*}^{(t)}$ packets with a constant power $P_{i^*}^{(t)}$. We assume the BS receives these packets error-free by the end of slot $t$ then a new time slot $t+1$ starts.

\section{Problem Statement}
\label{Prob_Statement}
Each SU $i$ has an average delay constraint $\bW_i \leq d_i$ that needs to be satisfied. Moreover, the PU can tolerate an interference level of $\Iinst$ at any given slot. Hence, the main objective is to solve the following problem
\begin{equation}
\begin{array}{ll}
\underset{\{\bfP^{(t)}\},\{\bfpi(k)\}}{\rm{minimize}}& \sum_{i=1}^N h_i\lb\bW_i\rb\\
\label{Problem}
\rm{subject \; to} & \sum_{i=1}^N {P_i^{(t)}g_i^{(t)}} \leq \Iinst \hspace{0.25in} , \hspace{0.25in} \forall t\geq 1\\
& \bW_i \leq d_i\\
& P_i^{(t)} \leq \Pmax \hspace{0.1in} , \hspace{0.1in} \forall i \in \script{N} \rm{\; and \;}\forall t \geq 1,\\
& \sum_{i=1}^N{  \mathds{1} \lb P_i^{(t)}\rb} \leq 1 \hspace{0.25in}, \hspace{0.25in} \forall t \geq 1,
\end{array}
\end{equation}
where $\bfP^{(t)} \pardef{P}$, $h_i\lb \cdot \rb$ are some convex increasing functions called the ``cost functions'', while $\mathds{1}(x)\triangleq1$ if $x\neq 0$ and $0$ otherwise. The last constraint indicates that no more than a single SU is to be transmitting at slot $t$.

As pointed out in Section \ref{Frame_Based_Policy}, the average delay $\bW_i$ in equation \eqref{Delay} is a function of the priority lists $\seq{\bfpi(k)}$. Hence, the objective function of \eqref{Problem} is a complicated function in the priority lists $\seq{\bfpi(k)}$. This makes problem \eqref{Problem} a joint power allocation and scheduling problem that is difficult solve using conventional convex, or non-convex, optimization problem algorithms. Thus, we next propose a low complexity update algorithm and show its optimality.

\section{Proposed Power Allocation and Scheduling Algorithm}
\label{Proposed_Algorithm}
Since the complexity of the solution is intractable, we solve the problem by proposing an online algorithm that dynamically updates the power allocation vector $\bfP^{(t)}$ and the priority vector $\bfpi(k)$. We show that this algorithm has an asymptotically optimal performance. That is, we can achieve a delay arbitrarily close to the optimal value depending on some control parameter $V$.

\subsection{Satisfying Delay constraints}
As will be discussed in Section \ref{Algorithm_Subsection}, the proposed algorithm is executed at the beginning of each frame. In order to guarantee a feasible solution satisfying the delay constraints in problem \eqref{Problem}, we set up a ``virtual queue'' associated with each delay constraint in problem \eqref{Problem}. The virtual queue for SU $i$ at frame $k$ is given by
\begin{equation}
Y_i(k+1)=\lb Y_i(k)+\sum_{j\in \script{A}_i(k)}{\lb W_i(j)-r_i(k)\rb} \rb^+
\label{Delay_Q}
\end{equation}
where $r_i(k)\in[0,d_i]$ is an auxiliary variable, that is to be optimized over, while $\script{A}_i(k)\triangleq\cup_{t\in\script{F}(k)}\script{A}_i^{(t)}$ is the set of all packets arrived at SU $i$'s buffer during frame $k$. We define $\bfY(k) \parFdef{Y}$ for notational convenience. Equation \eqref{Delay_Q} is calculated at the end of frame $k$ and represents the amount of delay exceeding the delay bound $d_i$ up to the end of frame $k$. We first mention the following definition, then state a lemma that gives a sufficient condition for the delay of SU $i$ to satisfy $\bW_i \leq d_i$.
\begin{Def}
\label{Mean_Rate_Def}
We say that the random sequence $\{Y_i(k)\}_{k=0}^\infty$ is mean rate stable if and only if $\lim_{K\rightarrow\infty}\EE{Y_i(K)}/K=0$.
\end{Def}

\begin{lma}
\label{Mean_Rate_Lemma}
If $\{Y_i(k)\}_{k=0}^\infty$ is mean rate stable, then the time-average delay of SU $i$ satisfies $\bW_i \leq d_i$.
\end{lma}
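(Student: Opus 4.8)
The plan is to turn the virtual-queue recursion \eqref{Delay_Q} into a sample-path bound on the accumulated delay, and then normalize. Since $(x)^+\ge x$, \eqref{Delay_Q} gives, for every frame $k$,
\begin{equation*}
Y_i(k+1)\;\ge\; Y_i(k)+\sum_{j\in\script{A}_i(k)}\lb W_i(j)-r_i(k)\rb .
\end{equation*}
Summing over $k=0,1,\dots,K-1$ makes the left-hand side telescope; using $r_i(k)\le d_i$ together with $Y_i(0)\ge 0$ then yields
\begin{equation*}
\sum_{k=0}^{K-1}\sum_{j\in\script{A}_i(k)}W_i(j)\;\le\; Y_i(K)+d_i\sum_{k=0}^{K-1}\vert\script{A}_i(k)\vert .
\end{equation*}

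Next I would take expectations of both sides and divide by $\EE{\sum_{k=0}^{K-1}\vert\script{A}_i(k)\vert}$, obtaining
\begin{equation*}
\frac{\EE{\sum_{k=0}^{K-1}\sum_{j\in\script{A}_i(k)}W_i(j)}}{\EE{\sum_{k=0}^{K-1}\vert\script{A}_i(k)\vert}}\;\le\; \frac{\EE{Y_i(K)}}{\EE{\sum_{k=0}^{K-1}\vert\script{A}_i(k)\vert}}+d_i .
\end{equation*}
Because $\lambda_i>0$ the expected cumulative number of arrivals grows without bound in $K$ (and, under stability, linearly), so writing the first term on the right as $\lb\EE{Y_i(K)}/K\rb\cdot\lb K/\EE{\sum_{k=0}^{K-1}\vert\script{A}_i(k)\vert}\rb$ and invoking the mean-rate-stability hypothesis (Definition~\ref{Mean_Rate_Def}) forces it to $0$ as $K\to\infty$. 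The assumed finiteness of the first four moments of $1/\Ri\lb P_i\rb$, hence of the busy-period length, is what guarantees the relevant frame-related expectations in the denominator are finite and grow at the expected rate.

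It then remains to recognize the left-hand side, as $K\to\infty$, as the time-average delay $\bW_i$ of \eqref{Delay}. Since the frames $\{\script{F}(k)\}$ partition the slot axis, the frame sums over $k=0,\dots,K-1$ equal the slot sums over $t=1,\dots,T_K$, where $T_K$ is the last slot of frame $K-1$, and $T_K\to\infty$ almost surely. A renewal-reward argument — both the numerator and the denominator in \eqref{Delay} are nondecreasing in $T$ with linearly growing expectations — identifies the limit along the random subsequence $T_K$ with the deterministic-$T$ limit that defines $\bW_i$; combining the three displays then gives $\bW_i\le d_i$. I expect the main obstacle to be exactly this last step: matching the frame-indexed limit against the slot-indexed limit in \eqref{Delay} and justifying the interchange of expectation and limit (a uniform-integrability check), which is precisely where the service-time moment assumptions are really used. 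The earlier manipulations are routine Lyapunov-drift bookkeeping.
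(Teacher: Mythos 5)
Your proposal is correct and, at bottom, it is the argument that the paper outsources to its reference: the paper's entire proof consists of invoking Lemma 3 of \cite{li2011delay} to obtain the intermediate bound \eqref{Wait_r_i} and then replacing $r_i(k)$ by its bound $d_i$, whereas you rederive that bound from first principles. Your telescoping of \eqref{Delay_Q} via $(x)^+\ge x$, dropping $Y_i(0)\ge 0$, using $r_i(k)\le d_i$, and killing the $\EE{Y_i(K)}$ term by writing it as $\lb \EE{Y_i(K)}/K\rb\lb K/\EE{\sum_{k}\vert\script{A}_i(k)\vert}\rb$ is exactly the standard virtual-queue argument and is sound; note only that the boundedness of the second factor needs no appeal to stability --- every frame contains at least one slot, so $\EE{\vert\script{A}_i(k)\vert}\ge\lambda_i\Ts$ and the cumulative expected arrivals grow at least linearly in $K$. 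The one step you leave as a sketch --- identifying the frame-indexed ratio with the slot-indexed $\bW_i$ of \eqref{Delay}, i.e., passing from sums over frames $0,\dots,K-1$ to sums over slots $1,\dots,T$ with deterministic $T$ via monotonicity/renewal-reward and the service-time moment conditions --- is precisely the content of the cited Lemma 3, so your level of detail there matches the paper's own (which merely states that the lemma ``can be modified'' to yield \eqref{Wait_r_i}). What your route buys is self-containedness: one sees exactly where $d_i$, the hypothesis $\lim_{K\rightarrow\infty}\EE{Y_i(K)}/K=0$, and the moment assumptions each enter, rather than a bare citation; what it costs is that a fully rigorous write-up would still have to carry out the frame-to-slot interchange and uniform-integrability check you flag.
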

\begin{proof}
Lemma 3 of \cite{li2011delay} can be modified to prove that
\begin{equation}
\bW_i \leq \lim_{K \rightarrow \infty} \frac{\EE{\sum_{k=1}^K{\vert \script{A}_i(k) \vert {r_i(k)}}}}{\EE{\sum_{k=1}^T{\vert\script{A}_i(k)\vert}}}.
\label{Wait_r_i}
\end{equation}
The proof follows by replacing $r_i(k)$ by its bound $d_i$ in \eqref{Wait_r_i}.
\end{proof}
Lemma \ref{Mean_Rate_Lemma} says that if the power allocation and scheduling algorithm results in a mean rate stable $\Yivq$, then the average delay constraint of problem \eqref{Problem} is satisfied.

\subsection{Algorithm}
\label{Algorithm_Subsection}
We now propose the \emph{Delay Optimal with Instantaneous Interference Constraint} (\emph{DOIC}) Algorithm executed at the beginning of each frame $k$ for finding $\bfP^{(t)}$ as well as the optimum list $\bfpi(k)$, given some prespecified control parameter $V$.

\noindent{\bf DOIC Algorithm:}
\begin{enumerate}
	\item At the beginning of frame $k$, the BS sorts the SUs according to the descending order of $Y_i(k)\mu_i(k)$. The sorted list is denoted $\bfpi(k)$.
	\item At the beginning of each slot $t\in\script{F}(k)$ the BS schedules SU $i^*$ that has the highest priority in the list $\bfpi(k)$ among those having non-empty buffers.
	\item SU $i^*$, in turn, transmits $M_{i^*}^{(t)}$ packets as dictated by equation \eqref{Num_Pkts} where $P_i^{(t)}=0$ for all $i\neq i^*$ while
	\begin{equation}
	P_{i^*}^{(t)}=\min(\Iinst/g_{i^*}^{(t)},\Pmax).
	\label{Pow_Allocation}
	\end{equation}
	\item At the end of frame $k$, for all $i\in \script{N}$ the BS updates:
	\begin{enumerate}
		\item $Y_i(k+1)$ via equation \eqref{Delay_Q}. And,
		\item $r_i(k+1)\triangleq \arg\underset{r_i(k)}{\min}{Vh_i\lb r_i(k) \rb - Y_i(k)\lambda_i r_i(k)}$.
	\end{enumerate}
	\item Set $k\leftarrow k+1$ and go to Step 1.
\end{enumerate}
 We next discuss the optimality of {\DOIC} in Theorem \ref{Optimality}.

\begin{thm}
\label{Optimality}
When the BS executes \DOIC, the time average of the SUs' delays satisfy the following inequality
\begin{equation}
\sum_{i=1}^N{h_i\lb\bW_i\rb} \leq \frac{C \sum_{i=1}^N{\lambda_i}}{V} + \sum_{i=1}^N{h_i \lb \bW_i^*\rb}
\label{Optimality_Equation}
\end{equation}
where $\bW_i^*$ is the optimum value of the delay when solving problem \eqref{Problem}, while $C$ is some constant that depends on the statistics of the arrival process as well as the channel statistics. Moreover, the virtual queues $\Yivq$ are mean rate stable $\forall i \in \script{N}$.
\end{thm}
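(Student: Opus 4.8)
\begin{proofsketch}
The plan is a frame-based drift-plus-penalty (Lyapunov optimization) argument. I would first introduce the quadratic Lyapunov function $L(\bfY(k))\triangleq\tfrac12\sum_{i=1}^N Y_i(k)^2$ and the one-frame conditional drift $\Delta(k)\triangleq\EEC{L(\bfY(k+1))-L(\bfY(k))}$. Squaring the virtual-queue update \eqref{Delay_Q} (with the usual relaxation $((a)^+)^2\leq a^2$) and taking conditional expectations yields a bound of the form
\begin{equation}
\Delta(k)\;\leq\;C'\;+\;\sum_{i=1}^N Y_i(k)\,\EEC{\sum_{j\in\script{A}_i(k)}\lb W_i(j)-r_i(k)\rb},
\label{sketch:drift}
\end{equation}
where the additive constant $C'$ is the sum over $i$ of the second-moment terms $\tfrac12\EEC{\lb\sum_{j\in\script{A}_i(k)}(W_i(j)-r_i(k))\rb^{2}}$; being a sum of $N$ per-queue contributions, each scaling with $\lambda_i$, it will eventually appear as $C\sum_i\lambda_i$. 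Adding to both sides the appropriate per-frame penalty term $V\sum_i\EEC{\vert\script{A}_i(k)\vert\,h_i(r_i(k))}$ gives the drift-plus-penalty metric that {\DOIC} is designed to make small frame by frame.

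The second step is to verify that {\DOIC} minimizes the right-hand side of the resulting inequality over all feasible frame-$k$ power-and-priority decisions (up to the constant $C'$), and that this minimization decouples. For the power: since the last constraint of \eqref{Problem} forces at most one transmitter per slot, the interference constraint collapses to $P_{i^*}^{(t)}g_{i^*}^{(t)}\leq\Iinst$, so \eqref{Pow_Allocation} is the largest feasible power and hence maximizes the rate \eqref{Tx_Rate}; this pointwise minimizes every service time $1/\Ri\lb P_i\rb$, and therefore $\mu_i^{-1}$, irrespective of the scheduling order. For the scheduling: with the service rates fixed by the power rule, $\sum_i Y_i(k)\,\EEC{\sum_{j\in\script{A}_i(k)}W_i(j)}$ is a $Y_i(k)$-weighted sum of mean waiting times in a non-preemptive M/G/1 system, which by the standard $c\mu$-type interchange argument is minimized by serving the SUs in decreasing order of $Y_i(k)\mu_i(k)$ --- precisely Step~1. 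Finally, Step~4(b) sets $r_i(k+1)$ to the exact minimizer of $Vh_i(r_i)-Y_i(k)\lambda_i r_i$, the only $r_i$-dependent term in the metric.

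The third step is the comparison with the optimum. By Lemma~1 of \cite{li2011delay} there is no loss of optimality in restricting to frame-based random-priority policies, and a standard argument then furnishes, for every $\epsilon>0$, a stationary randomized frame-based policy whose long-run cost is within $\epsilon$ of $\sum_i h_i(\bW_i^*)$ and under which the queue-weighted term $\sum_i Y_i(k)\EEC{\sum_{j\in\script{A}_i(k)}(W_i(j)-r_i(k))}$ in \eqref{sketch:drift} is $\leq 0$ (take its auxiliary variables constant and equal to the delays it achieves, which are at most $d_i$). Substituting this benchmark into the minimized right-hand side, telescoping the drift-plus-penalty inequality over $k=0,\dots,K-1$, normalizing as in \cite{li2011delay} --- the normalization by the expected per-frame arrival counts is what ties the aggregate back to $\bW_i$ via \eqref{Wait_r_i} and supplies the $\sum_i\lambda_i$ prefactor --- applying Jensen's inequality to the convex $h_i$, and letting $K\to\infty$ and then $\epsilon\to 0$, yields \eqref{Optimality_Equation}. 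The same telescoped inequality gives $\EE{L(\bfY(K))}\leq\EE{L(\bfY(0))}+O(K)$, so $\EE{Y_i(K)}\leq\sqrt{\EE{Y_i(K)^2}}=O(\sqrt K)$ and $\EE{Y_i(K)}/K\to0$, i.e.\ mean rate stability.

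I expect the main obstacle to be showing that $C'$ (hence $C$) is finite, i.e.\ bounding $\EEC{\lb\sum_{j\in\script{A}_i(k)}(W_i(j)-r_i(k))\rb^{2}}$ uniformly in $\bfY(k)$. Since $r_i(k)\leq d_i$ is bounded, this reduces to controlling the second moment of the aggregate packet delay accrued during one frame; because the per-packet delays are M/G/1 waiting times whose moments are governed by those of the service time $1/\Ri\lb P_i\rb$, and the number of arrivals in a frame is random and statistically coupled to the busy-period length, this is exactly where the hypothesis that the first four moments of $1/\Ri\lb P_i\rb$ are finite is consumed. A secondary technical point is the renewal normalization: one must check $0<\liminf_K\EE{\vert\script{F}(K)\vert}$ and $\limsup_K\EE{\vert\script{F}(K)\vert}<\infty$, which needs feasibility of \eqref{Problem} so that busy periods have finite mean.
\end{proofsketch}
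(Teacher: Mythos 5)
Your proposal is correct and follows essentially the same route as the paper's appendix: the frame-based quadratic Lyapunov drift-plus-penalty bound with the second-moment terms absorbed into a finite constant (where the fourth-moment assumption on $1/\Ri\lb P_i\rb$ is used), observation that \DOIC{} minimizes the per-frame right-hand side, comparison against an optimal (genie-aided) benchmark with $r_i(k)$ set to the optimal delays, telescoping, Jensen-type arguments for both $h_i\lb\bW_i\rb$ and mean rate stability via $\EE{Y_i(K)}\leq\sqrt{\EE{Y_i^2(K)}}=O(\sqrt{K})$. The only differences are presentational: you spell out the $c\mu$-type interchange argument and the $\epsilon$-optimal stationary randomized benchmark, which the paper simply asserts or replaces by directly evaluating the bound at the optimal policy.
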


\begin{proof}
See Appendix \ref{Optimality_Proof}.
\end{proof}

Theorem \ref{Optimality} says that the objective function of problem \eqref{Problem} is upper bounded by the sum of the optimum values $h_i \lb \bW_i^*\rb$ plus some constant that vanishes as $V\rightarrow\infty$. The drawback of setting $V$ very large is that the algorithm converges slower. That is, the virtual queues become mean rate stable after a larger number of frames. Having a vanishing gap means that {\DOIC} is asymptotically optimal. Moreover, based on the mean rate stability of the queues $\Yivq$, the delay constraints of problem \eqref{Problem} are satisfied. We note that the complexity of the {\DOIC} algorithm is polynomial in time since it only requires sorting the quantity $Y_i(k) \mu_i(k)$, $i \in \script{N}$, in a descending order.

\section{Simulation Results}
\label{Results}
We simulated the system of $N=2$ SUs with $h_i(x)=x^2/2$ $\forall i=1,2$, which is a cost function that guarantees proportional fairness\footnote{See Chapter 2.2 of \cite{srikant2013communication}.} among SUs (refer to Table \ref{Parameters} for a complete list of parameter values). Since the interference channel gain for SU $1$ is higher than that for SU $2$, we expect SU $1$ to have higher average delay. However, the {\DOIC} algorithm can guarantee a bound on this delay using the constraint $\bW_1\leq d_1$, so that the QoS requirement of SU $1$ is satisfied. In our simulations we set $d_1=1.25\Ts$.

Assuming perfect knowledge of the direct and interference channel state information (CSI), namely $g_i^{(t)}$ and $\gamma_i^{(t)}$, Fig. \ref{PerUser_Delay_Inst} plots the average per-SU delay $\bW_i$, from equation \eqref{Delay}, for two scenarios; the first being the constrained optimization problem where $d_1=1.25\Ts$ while setting $d_2$ to any arbitrary high value (we set $d_2=3\Ts$), while the second is the unconstrained optimization problem\footnote{The reason we call it the unconstrained problem is because the average delay of both SUs is strictly below $3\Ts$, thus both delay constraints are inactive.} where both $d_1$ and $d_2$ are set arbitrary high (we set $d_1=d_2=3\Ts$). The X-axis is the average number of packets arriving per time slot $\lambda$, where $\lambda \triangleq \lambda_1=\lambda_2$. 
From Fig. \ref{PerUser_Delay_Inst} we can see a gap, in the unconstrained problem, between the average delay of SU $1$ and that of SU $2$. Hence, SU $1$ suffers arbitrary high delay. While for the constrained problem, the {\DOIC} has forced $\bW_1$ to be smaller than $1.25\Ts$ for all $\lambda$ values. This comes at the cost of SU $2$'s delay. We conclude that the delay constraints in problem \eqref{Problem} can force the delay vector of the SUs to take any value as long as it is feasible.

Fig. \ref{Avg_Delay_Inst} plots the sum of cost functions versus $\lambda$ for the perfect CSI case for both the constrained and unconstrained problem. Comparing the two curves, we find that the constrained problem has worse sum of cost functions for $\lambda\geq 0.6$. This is because the constrained problem has a smaller feasible region than that of the unconstrained one. We note that the PU's interference constrained is satisfied with probability 1 based on equation \eqref{Pow_Allocation}.

For the imperfect CSI case, we assumed that each SU has an error of $10\%$ in estimating each of $g_i^{(t)}$ and $\gamma_i^{(t)}$. And to guarantee protection to the PU from interference, we replaced equation \eqref{Pow_Allocation} by $P_{i^*}^{(t)}=\min(\Iinst/g_{i^*}^{(t)}/1.1,\Pmax)$. From Fig. \ref{Avg_Delay_Inst} we can see that the performance difference between the perfect and the imperfect CSI constrained problem ranges between $5.6\%$ at $\lambda=0.1$, and $20\%$ at $\lambda=1$. This is considered a good performance for this high CSI estimation error value. We may note that this performance difference represents an upper bound on the actual difference since the $10\%$ is an upper bound on the actual estimation error.

\begin{figure}%
\centering
\includegraphics[width=\columnwidth]{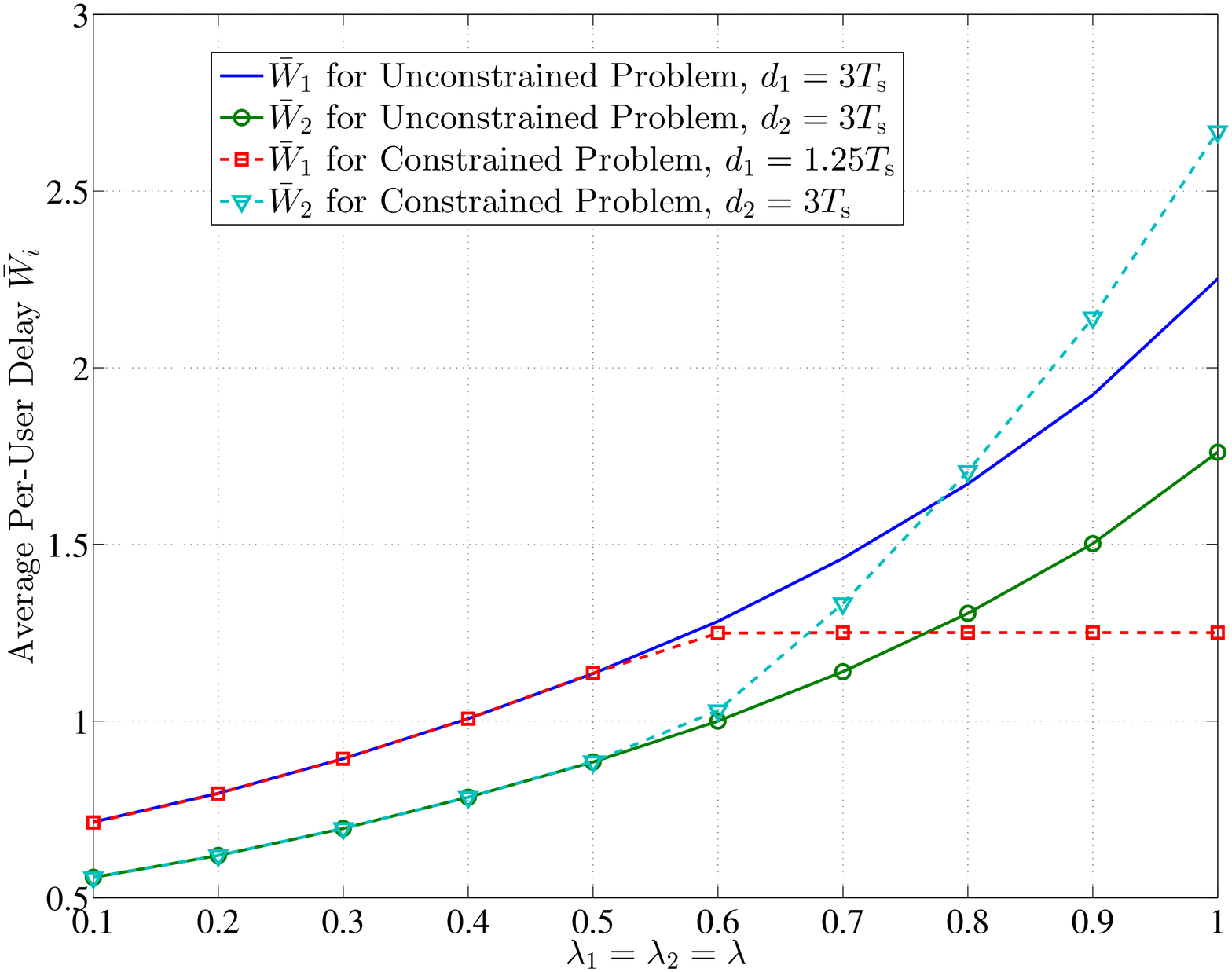}%
\caption{Average per-user delay for both the constrained and unconstrained optimization problems}%
\label{PerUser_Delay_Inst}%
\end{figure}

\begin{figure}%
\centering
\includegraphics[width=\columnwidth]{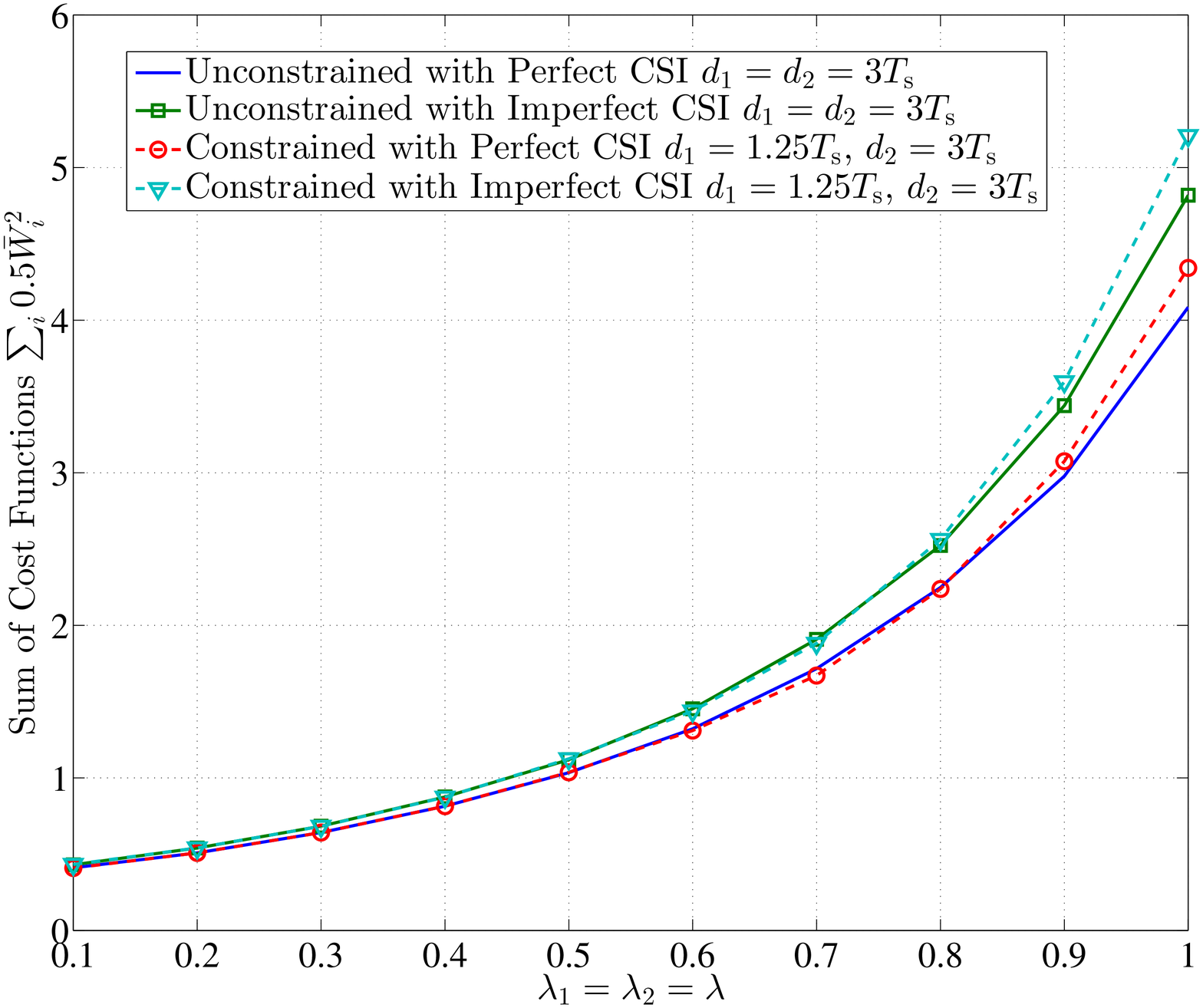}%
\caption{Sum of cost functions for the perfect as well as the imperfect channel sensing for both the constrained and unconstrained optimization problems.}%
\label{Avg_Delay_Inst}%
\end{figure}

\begin{table}
	\centering
		\caption{Simulation Parameter Values}
		\label{Parameters}
		\begin{tabular}{|c|c|}
			\cline{1-2}
			Parameter & Value \\ \cline{1-2}
			$\lambda_1=\lambda_2=\lambda$ & $\lambda \in \{0.1,0.2,\cdots,1\}$ packets/slot\\
			$\fgammai(\gamma)$ & $\exp{\lb-\gamma/\bgamma_i\rb}/\bgamma_i$\\
			$\fgi(g)$ & $\exp{\lb-g/\bar{g}_i\rb}/\bar{g}_i$\\
			$(\bgamma_1,\bgamma_2)$ & $(1,1)$\\
			$(\bar{g}_1,\bar{g}_2)$ & $(4,2)$\\
			$\Iinst$ & 5 \\
			$\Pmax$ & 10 \\
			$\epsilon$ & $10^{-2}$ \\
			$V$ & $10^3$ \\ \cline{1-2}
			\end{tabular}
\end{table}

\section{Conclusion}
\label{Conclusion}
We have studied the joint scheduling and power allocation problem of an uplink multi SU CR system. We formulated the problem as a delay minimization problem in the presence of instantaneous interference constraints to the PU, as well as an average delay constraint for each SU that needs to be met. Most of the existing literature that studies this problem either assumes on-off fading channels or does not provide a delay-optimal algorithms which is essential for real-time applications.

We proposed a dynamic algorithm that schedules the SUs based on a dynamically-updated priority list. The proposed algorithm updates the priority list on a frame basis while controlling the power on a per-slot basis. 
We showed, through the Lyapunov optimization, that the proposed algorithm is asymptotically delay optimal. That is, it minimizes the sum of any convex increasing function of the average delays of the SUs as well as satisfying the instantaneous interference and average delay constraints. Extensive simulation results showed the robustness of our algorithm against CSI estimation errors.


\appendices
\section{Proof of Theorem \ref{Optimality}}
\label{Optimality_Proof}
We define the Lyapunov function and Lyapunov drift to be
\begin{align}
&L \lb \bfY(k)\rb \triangleq \frac{1}{2}\sum_{i=1}^N Y_i^2(k), \hspace{0.2in}{\rm and}
\label{Lyap_Func}\\
\Delta \lb \bfY(k)\rb &\triangleq \EEC{L \lb \bfY(k+1) \rb - L \lb \bfY(k) \rb},
\label{Drift_Def}
\end{align}
respectively, where $\EEC{x}$ denotes the conditional expectation of $x$ given $\bfY(k)$, namely $\EEC{x} \triangleq \EE{x \vert \bfY(k)}$. Squaring equation \eqref{Delay_Q} and taking the conditional expectation, we get
\begin{align}
\nonumber &\EEC{Y_i^2(k+1)-Y_i^2(k)} =Y_i(k) \lb \EEC{W_i(k)\FDurK}\lambda_i\rb\\
& -\EEC{\FDurK}+ \EEC{\lb \sum_{j\in \script{F}_k} \lb W_i^{(j)} - r_i(k)\rb\rb^2}
\lambda_i r_i(k)
\label{Delay_Q_Sq1}\\
& \hspace{0.3in} \leq Y_i(k) \EEC{\FDurK}\lambda_i \lb \bW_i(k)-r_i(k)\rb + C,
\label{Delay_Q_Sq2}
\end{align}
where the inequality in \eqref{Delay_Q_Sq2} comes from upper-bounding the first term in \eqref{Delay_Q_Sq1} by some finite constant $C<\infty$. The existence of this finite constant can be shown by following steps similar to the proof of Lemma 4 in \cite{li2011delay}. We omit these steps for brevity. Given some fixed control parameter $V>0$, we add the penalty term $V\sum_i \EEC{h_i(r_i(k))\FDurK}$ to both sides of \eqref{Drift_Def}. Using the bound in \eqref{Delay_Q_Sq2}, the drift-plus-penalty term becomes bounded by
\begin{align}
\nonumber \Delta &\lb \bfY(k)\rb + V\sum_{i=1}^N \EEC{h_i(r_i(k))\FDurK}
\leq \EEC{\FDurK} \times \\
&\sum_{i=1}^N  \left[ V h_i(r_i(k))  +  Y_i(k) \lambda_i \lb \bW_i(k)- r_i(k)\rb \right] +C.
\label{Drift_Plus_Penalty1}
\end{align}
The proposed {\DOIC} algorithm minimizes the summation in the right-side of \eqref{Drift_Plus_Penalty1}. This gives a lower bound on the right-side of \eqref{Drift_Plus_Penalty1} under any other algorithm. That is, the right-side, evaluated under the \DOIC, is a lower bound on right-side when evaluated under any other policy including the optimal policy that solves \eqref{Problem}. Hence, we now evaluate the right-side of \eqref{Drift_Plus_Penalty1} at the optimal policy that solves \eqref{Problem} with the help of a genie-aided knowledge of $r_i(k)=\bW_i^*$, yielding an upper bound on the drift-plus-penalty evaluated at \DOIC. Namely
\begin{align}
\nonumber \Delta \lb \bfY(k)\rb +& V\sum_{i=1}^N \EEC{h_i(r_i(k))\FDurK} \\
&\leq C + V\sum_{i=1}^N h_i(\bW_i^*)\EEC{\FDurK}
\label{DOIC_Genie}
\end{align}
where the left-side of \eqref{DOIC_Genie} is evaluated at \DOIC. Taking $\EE{\cdot}$, summing over $k=0,\cdots,K-1$, denoting $L\lb \bfY_i(0)\rb\triangleq 0$ for all $i\in\script{N}$, and dividing by $V\sum_{k=1}^{K-1} \EE{\FDurK}$ we get
\begin{align}
\nonumber \sum_{i=1}^N \frac{\EE{Y_i^2(K)}}{K}& \frac{K}{\sum_k \EE{\FDurK}}+ \sum_{i=1}^N \frac{\sum_k\EE{h_i \lb r_i(k) \rb\FDurK}}{\sum_k \EE{\FDurK}}\\
& \leq \frac{CK}{V\sum_k \EE{\FDurK}} + \sum_{i=1}^N h_i\lb \bW_i^*\rb.
\label{Optimal_Eq}
\end{align}
From equation (3) in \cite{li2011delay} we have
\begin{equation}
\EE{\FDurK}=\frac{1}{\lb 1-\sum_i \frac{\lambda_i}{\EE{\mu_i(P_i^{(t)}}}\rb \lb 1-\sum_i \lambda_i \rb},
\label{Frame_Bound}
\end{equation}
where $P_i^{(t)}$ is given by step 3 in the {\DOIC} algorithm. To prove the upper bound in theorem \ref{Optimality}, we remove the first summation in the left-side of \eqref{Optimal_Eq}, and use Lemma 7.6 in \cite{neely2010stochastic} to prove that
\begin{equation}
h_i \lb \bW_i \rb \leq \frac{\sum_k\EE{h_i \lb r_i(k) \rb\FDurK}}{\sum_k \EE{\FDurK}},
\label{Lemma_Seven_Point_Six}
\end{equation}
where $\bW_i$ is given by equation \eqref{Delay}. On the other hand, to prove the mean rate stability of the sequence $\{Y_i(k)\}_{k=0}^\infty$, we use \eqref{Frame_Bound} to denote the right-side of \eqref{Optimal_Eq} by $C_1$ (since it does not depend on $K$), remove the second summation in the left-side of \eqref{Optimal_Eq}, use equation \eqref{Frame_Bound} to obtain
\begin{equation}
\sum_{i=1}^N \frac{\EE{Y_i^2(K)}}{K} \leq C_1,
\label{MRS_bound}
\end{equation}
and use Jensen's inequality to note that
\begin{equation}
\frac{\EE{Y_i(K)}}{K} \leq \sqrt{\frac{\EE{Y_i^2(K)}}{K^2}} \leq \sqrt{\frac{C_1}{K}}.
\label{Jensens}
\end{equation}
Finally, taking the limit when $K\rightarrow \infty$ completes the mean rate stability proof.
%



\bibliographystyle{IEEEbib}
\bibliography{MyLib}

\end{document}